\newcommand{\pbold}{\boldsymbol{p}}
\newcommand{\dF}{d^{\mathrm{F}}}
\newcommand{\dFMax}{\overline{d}^{\mathrm{F}}}
\newcommand{\dFzero}{d^{\mathrm{F0}}}
\newcommand{\dFzeroMax}{\overline{d}^{\mathrm{F0}}}
\newcommand{\dN}{d^{\mathrm{N}}}
\newcommand{\dNMax}{\overline{d}^{\mathrm{N}}}
\newcommand{\dNzero}{d^{\mathrm{N0}}}
\newcommand{\dNzeroMax}{\overline{d}^{\mathrm{N0}}}
\newtheorem{lemma}{Lemma}
\newtheorem {remark}{Remark}
\def\BibTeX{{\rm B\kern-.05em{\sc i\kern-.025em b}\kern-.08em
    T\kern-.1667em\lower.7ex\hbox{E}\kern-.125emX}}
\begin{document}
	\raggedbottom
	\allowdisplaybreaks

     \title{Revisiting the Fraunhofer and Fresnel Boundaries for Phased Array Antennas
     \thanks{This work was supported by Business Finland via the 6GBridge - Local 6G project (grant number 8002/31/2022) and by the Research Council of Finland through the 6G Flagship (grant number 346208).}}


    
    	


\author{
{Mehdi Monemi, Mehdi Rasti, and Matti Latva-aho}
\\Center for Wireless Communications (CWC),
University of Oulu,
Oulu, Finland \\
Email: \{mehdi.monemi, mehdi.rasti, matti.latva-aho\}@oulu.fi}
 \maketitle
\begin{abstract}
This paper presents the characterization of near-field propagation regions for phased array antennas, with a particular focus on the propagation boundaries defined by Fraunhofer and Fresnel distances. These distances, which serve as critical boundaries for understanding signal propagation behavior, have been extensively studied and characterized in the literature for single-element antennas. However, the direct application of these results to phased arrays, a common practice in the field, is argued to be invalid and non-exact. This work calls for a deeper understanding of near-field propagation to accurately characterize such boundaries around phased array antennas.
More specifically,
for a single-element antenna, the Fraunhofer distance is $\dF=2D^2 \sin^2(\theta)/\lambda$ where $D$ represents the largest dimension of the antenna, $\lambda$ is the wavelength and $\theta$ denotes the observation angle. We show that for phased arrays, $\dF$ experiences a fourfold increase (i.e., $\dF=8D^2 \sin^2(\theta)/\lambda$) provided that $|\theta-\frac{\pi}{2}|>\theta^F$ (which holds for most practical scenarios), where $\theta^F$ is a small angle whose value depends on the number of array elements, and for the case $|\theta-\frac{\pi}{2}|\leq\theta^F$, we have $\dF\in[2D^2/\lambda,8D^2\cos^2(\theta^F)/\lambda]$, where the precise value is obtained according to some square polynomial function $\widetilde{F}(\theta)$. Besides, we also prove that the Fresnel distance for phased array antennas is given by $\dN=1.75 \sqrt{{D^3}/{\lambda}}$ which is $\sqrt{8}$ times greater than the corresponding distance for a conventional single-element antenna with the same dimension.
\end{abstract}
\begin{keywords}
	Near-field, far-field, characterization, Fraunhofer, Fresnel, phased array antennas.
\end{keywords}
	
\maketitle	
	

\section{Introduction}
\label{sec:introduction} 
The application of near-field communication is expected to significantly expand in existing and future generations of wireless technology compared to previous generations \cite{10243590, zhang20236g}. This phenomenon is attributable to two key factors. Firstly, modern applications gravitate towards higher frequencies in the millimeter wave (mmWave) and terahertz (THz) bands \cite{10040913}. 
Secondly, due to the severe path loss experienced at such high frequencies, these applications employ extremely large-scale antenna arrays (ELAAs) to compensate for this loss through high-directivity beamforming schemes \cite{10379539}. Noting that the near-field propagation region extends by increasing the frequency and antenna diameter, these advancements have collectively led to a substantial expansion of the near-field region for electromagnetic (EM) wave propagation in modern high-frequency applications exploiting ELAAs. This expansion has the potential to extend the reach of near-field communication to hundreds of meters in many applications, as highlighted in recent studies \cite{zhang20236g, cui2022near6G}. 
Near-field communication and propagation provide a higher degree of freedom (DoF), paving the way for innovative applications that are not possible in the far-field. For example, it facilitates beamforming in both radial and angular dimensions, which opens up the potential for 3D beamfocusing. Moreover, the near-field presents unique challenges and opportunities such as locally changing polarization of EM waves, non-stationary characteristics of antenna elements, non-sparse channel representation of ELAAs, and holographic attributes of antenna arrays. In this scenario, to effectively navigate the intricacies of near-field communication, leverage its distinctive features, and build sturdy and efficient modern wireless systems, a profound comprehension of near-field signal behavior and precise characterization of near-field boundaries is essential, especially when it originates from phased array antennas.

The signal propagation region can be divided into 3 subregions differentiated by the boundaries named {\it Fraunhofer distance}, {\it Fresnel distance}, and {\it Non-radiative distance}. The Fraunhofer distance determines the transition boundary between the near-field and far-field.
    The Fresnel distance relates to a specific phase delay approximation inside the near-field region, as explained later, and finally, the non-radiative distance is the distance below which the non-radiative reactive power dominates the active power. The latter is very close to the antenna, usually lower than half a wavelength \cite{9282256, balanis2016antenna}.
The {\it Fraunhofer} and {\it Fresnel} distances have been extensively studied and characterized in the literature for single-element antennas. These distances serve as critical boundaries in understanding antenna behavior. However, when it comes to phased arrays, a common practice has been to apply these results without modification directly. Our work rigorously argues that this approach is invalid and non-exact.
In this paper, we revisit existing definitions, perform detailed calculations, and provide characterizations for the Fraunhofer and Fresnel boundaries by considering phased array antennas, and present closed-form expressions.   
Notably, we emphasize the inaccuracies arising from directly applying single-element antenna characterization principles to phased arrays, as observed in prior works. A schematic representation of the stated regions developed in existing works for a single-element antenna and those characterized in this work for phased array antennas having the same dimensions as the single-element one are depicted in Fig. \ref{fig:regions} as dotted and dashed lines respectively.

\section{Background and Contributions}
In what follows, we explore the background and contributions relating to the characterization of Fraunhofer and Fresnel distances for phased array antennas. Before starting the literature review, we present some terms and definitions used in the following parts. We define the {\it principle axis} of the antenna as the axis perpendicular to the antenna surface, passing from the geometrical central point of the antenna. The {\it boresight} is the axis corresponding to the main lobe of the antenna. This is identical to the principal axis for single-element center-fed symmetrical antennas, however, for phased arrays, we use this term to represent the axis connecting the geometrical antenna center and the intended observation/transmitting point source. The principal axis and boresight are depicted for single-element and phased array antennas in Figs. \ref{fig:Fraunhofer_single} and \ref{fig:Fraunhofer_revisit} respectively. The {\it on-boresight} and {\it off-boresight} refer to the cases where the principal axis does and does not, respectively, coincide with the boresight of the antenna.

{\hspace{-14pt} \it 1- \underline{Fraunhofer region}}

{\bf Background:} 
The Fraunhofer distance, also known as the Rayleigh distance, separates the near-field region dominated by spherical wave propagation from the far-field region where the planar wave approximation holds.
It is well-known that a maximum phase delay of $\pi/8$ between a point on the antenna surface and the feed, caused by the curvature of the wavefront, is small enough to approximate the spherical wavefront as planar. This phase delay has been accepted in the literature for defining and calculating the Fraunhofer distance. 
The maximum value of the Fraunhofer distance for a center-fed single-element antenna is obtained on the boresight as $\dFzeroMax=2D^2/\lambda$, where $D$ is the maximum dimension of the antenna and $\lambda$ is the wavelength. For the case where the observation point is not on the boresight, the Fraunhofer distance can be obtained as $\dFzeroMax=2D^2\sin^2(\theta)/\lambda$ where $\theta$ is the observation angle which is the angle between the antenna plane, and the line connecting the radiating/observing point and the antenna reference point (i.e., the feed point at the center) 
as depicted in Fig. \ref{fig:Fraunhofer_single}. 
Several works have calculated the Fraunhofer distance for a single-element antenna following different schemes. The most straightforward scheme is to calculate the dominant terms of the binomial or Taylor series expansion approximation of the signal, and then obtain the maximum distance from the antenna where the dominant terms would result in a {\it differential phase delay} no more than $\pi/8$ on different points of the antenna surface \cite{balanis2016antenna,stutzman2012antenna}. As an alternative solution, the authors of \cite{selvan2017fraunhofer} have derived $\dFzeroMax$ by considering a continuous aperture of arbitrary shape, and then they employed scalar diffraction theory for obtaining Fraunhofer and Fresnel regions. In \cite{10541333}, the {\it effective Fraunhofer distance} $\dFzeroMax_{\mathrm{eff}}$ is defined and characterized as the distance where the normalized beamforming gain under the far-field assumption is no less than a value $\eta$. By considering $\eta=95\%$, it has been shown that we have $\dFzeroMax_{\mathrm{eff}}=0.367\dFzeroMax$ on the boresight of the antenna.
To the best of our knowledge, all works in the literature except \cite{hu2023design} have considered the same Fraunhofer values characterized based on the single-element antennas to apply to phased arrays as well (e.g., \cite{cui2022near6G,yan2021joint,cui2022near,9978148}). The authors of \cite{hu2023design}  have partially investigated the fact that for the case of an off-boresight phased array scenario, the Fraunhofer distance is increased compared to a single-element antenna, however, a comprehensive and analytical investigation is missing. In particular, a detailed analysis of how the observation angle $\theta$  exactly impacts the Fraunhofer distance, and how it is changed softly when moving from off-boresight toward on-boresight angles is missing.  

{\bf Contributions:}
We delve into an analysis of the well-established derivation of the Fraunhofer distance, which traditionally relies on the single-element antenna model, and show that this procedure does not directly translate to phased arrays. As a result, the widely accepted values of the Fraunhofer distance for phased arrays, as found in most existing literature, are inexact. Then we derive a closed-form value for the Fraunhofer distance for phased arrays for $\theta\in[0,\pi]$. More specifically, we define and calculate the {\it Fraunhofer array angle} $\theta^{\mathrm{F}}$ and show that for $\theta\in[0,\pi/2-\theta^{\mathrm{F}}] \cup [\pi/2+\theta^{\mathrm{F}},\pi]$ the Fraunhofer distance is $\dF=8D^2\sin^2(\theta)/\lambda$ which is four times $\dFzero$ relating to  
a single-element antenna with the same diameter $D$. For the case where $\theta$ is softly changed from $\pi/2\pm \theta^{\mathrm{F}}$ toward $\pi/2$ (where $\pi/2$ corresponds to the boresight), $\dF$ softly switches to $2D^2/\lambda$ according to a function whose value is derived in a closed form.
We obtain a tight closed-form approximation for the value of $\theta^{\mathrm{F}}$ and show that it is a small angle, its value being a function of the number of array elements; as the antenna array scales up, this angle tends toward zero. The schematic presentation of the characterized Fraunhofer distance $\dF$ as well as that relating to a center-fed single-element antenna (i.e., $\dFzero$) are depicted in Fig. \ref{fig:regions} per $\theta$ as the dashed and dotted blue lines respectively.

{\it \hspace{-14pt} 2- \underline{Fresnel distance}}

{\bf Background:}  The Fresnel distance is presented as the distance from the transmit antenna in the near-field region where the curvature of the wavefront is mainly due to the second term of binomial expansion of $r'$ in terms of $r$, and the maximum phase delay relating to the third term is limited to $\pi/8$. Several works have calculated the maximum Fresnel distance based on the single-element antenna using scalar diffraction theory or directly employing the binomial expansion or Taylor series \cite{stutzman2012antenna,balanis2016antenna,selvan2017fraunhofer}. Several slightly different values of the Fresnel distance have been presented in the literature. The most well-known Fresnel distance function is presented for a center-fed single-element antenna as $d^{\mathrm{N0}}= \sqrt{ |\cos \theta|\sin^2 \theta  \frac{D^3}{\lambda}}$, which corresponds to a maximum Fresnel distance of $\dNzeroMax=0.62\sqrt{{D^3}/{\lambda}}$ \cite{balanis2016antenna}.

{\bf Contributions:}
Similar to the discussion relating to the Fraunhofer region, we justify that the simple model of a center-fed single-element antenna does not apply to calculate the exact value of the Fresnel distance for phased arrays, and thus we need to revisit the existing relations. We derive a multi-relation expression to represent the Fresnel distance as a function of the observation angle $\theta$ and array size. Additionally, we prove that the  Fresnel distance for phased arrays is given by $ \dNMax=1.75 \sqrt{{D^3}/{\lambda}}$ which is $\sqrt{8}$ times greater than that of $\dNzeroMax$ associated with a single-element antenna having the same diameter as the phased array.

\begin{figure*}
    \centering
 \includegraphics[width=384pt]{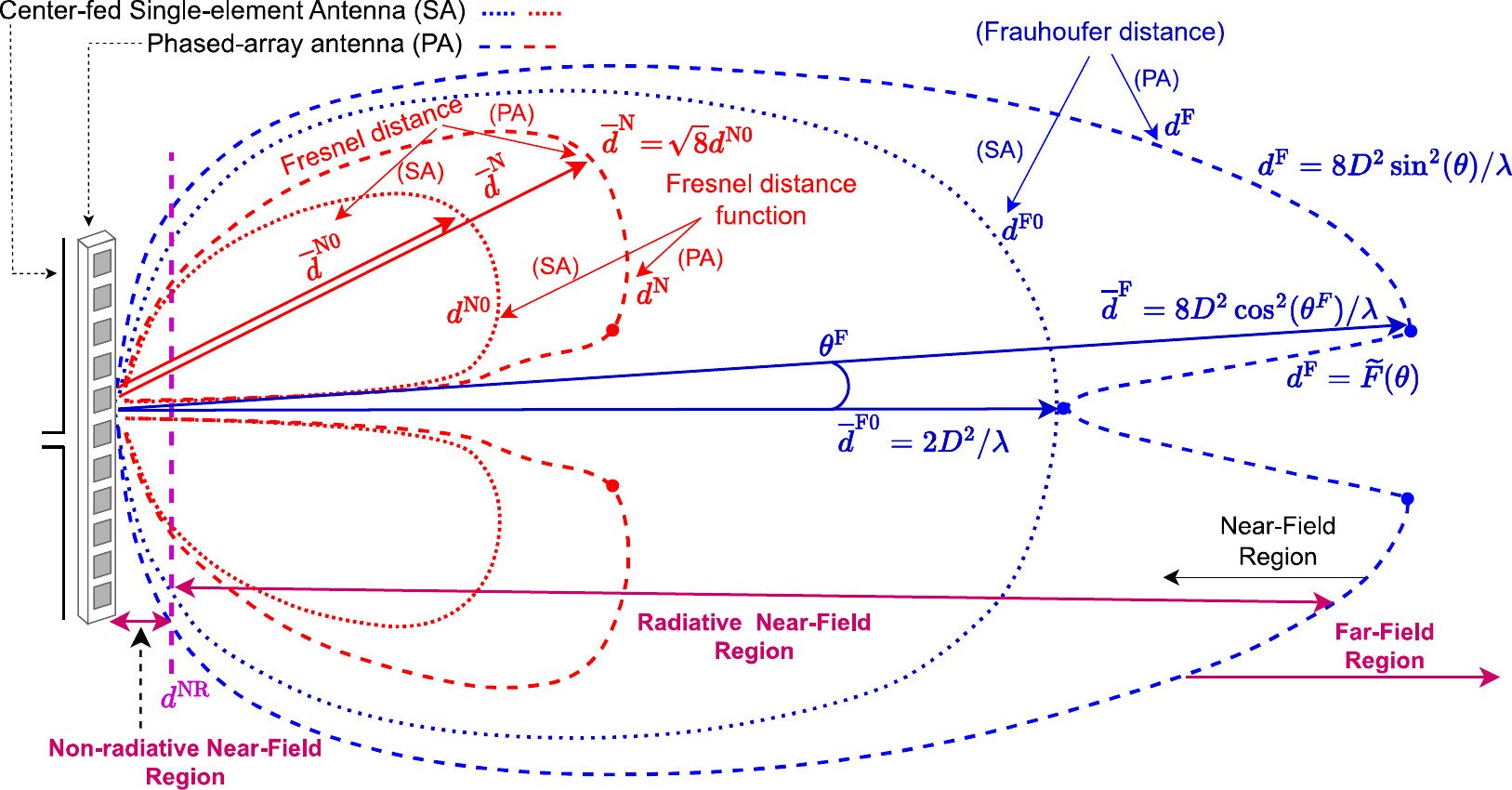}
\caption{Franhofer and Fresnel characterized regions for a phased array antenna (dashed lines), versus a center-fed single-element antenna having the same dimension (dotted lines).} 

    \label{fig:regions}
\end{figure*}

\section{Characterization of Fraunhofer and Fresnel Regions}
\label{sec:Fraunhofer_Fresnel}
In what follows, first, we review the definitions and calculations of the Fraunhofer and Fresnel distances characterized for a single-element antenna \cite{balanis2016antenna,selvan2017fraunhofer}. Then, we explain the limitations of applying these values to phased array antennas and propose a new approach to determine these values for such antennas.

\subsection{Fraunhofer and Fresnel distance based on the single-element antenna model}
First, consider a transmitting point source and a single-element center-fed receiving antenna with diameter $D$ as depicted in Fig. \ref{fig:Fraunhofer_single}-a. Let $r$ and $r'$ be the distance from the transmitting source to the center  (feed point) and some point $\pbold'$ on the antenna respectively, where the distance between the antenna center and point $\pbold'$ is denoted by $d'$, and the angle between the lines connecting respectively the transmitting point and $\pbold'$ to the feed is denoted by $\theta$. We can write $r'$  in terms of $r$,  $d'$, and $\theta$ as follows:
\begin{align}
    r'=
    \sqrt{r^2+(-2rd'\cos \theta +(d')^2) }
\end{align}
By using the binomial expansion, the phase difference between the signals arrived at $\pbold'$ and the antenna center is obtained as 
\begin{multline}
\label{eq:deltatheta}
    \Delta\theta=\frac{2\pi}{\lambda}\left(r'-r\right)
=
    \underbrace{\frac{-2\pi}{\lambda}\cos \theta d' }_{\Delta\theta_1}
+
\underbrace{\frac{\pi}{\lambda}  \sin^2 \theta  (d')^2 r^{-1}}_{\Delta\theta_2}
+
\\
\underbrace{\frac{\pi}{\lambda} \cos \theta\sin^2 \theta (d')^3 r^{-2}}_{\Delta\theta_3}
    + \ ...
\end{multline}
It is seen that the first term $\Delta \theta_1$ is independent of the distance $r$. 
It is well known that a phase error equal to $\pi/8$  caused by the curvature of the wavefront is small enough to approximate the spherical wavefront as planar. Therefore, the Fraunhofer distance denoted by $\dFzero$ is defined as the boundary limit for which the maximum phase error between some point on the antenna and the feed is equal to $\pi/8$. i.e.,
\begin{align}
\label{eq:Fraunhofer_single_calc}
    \dFzero=r,\ \mathrm{s.t.}\ \Delta\theta_2\bigg|_{ d'=D/2}=\pi/8.
\end{align}
From \eqref{eq:deltatheta}, the {\it Fraunhofer distance} $\dFzero$ is calculated as
\begin{align}
\label{eq:Fraunhofer_single}
    \dFzero=2D^2 \sin^2(\theta)/\lambda
\end{align}
and thus, the {\it maximum Fraunhofer distance} denoted by $\overline{d}^{\mathrm{F0}}$ is obtained on the boresight ($\theta=\pi/2$) as 
\begin{align}
    \label{eq:Fraunhofer_single_max}
    \dFzeroMax={2D^2}/{\lambda}
\end{align}
\begin{remark}
    As seen in \eqref{eq:Fraunhofer_single} and Fig. \ref{fig:Fraunhofer_single}-b, the Fraunhofer distance $\dFzero$ which stands for the boundary between near-field and far-field regions is dependent on $\theta$. Therefore, as opposed to most works where the {\it Fraunhofer distance} is considered to be equal to $2D^2/\lambda$ which is independent of $\theta$,  in this work we name this as {\it maximum Fraunhofer distance}, and the angular dependant term expressed in \eqref{eq:Fraunhofer_single} as the {\it Fraunhofer distance}. 
\end{remark}

In a similar way, if we consider $|\Delta \theta_3|\leq \pi/8$, the fourth term in \eqref{eq:deltatheta} can be neglected, and thus the {\it Fresnel distance function} denoted by $\dNzero$ is obtained by solving the following equation. 
\begin{align}
    \dNzero=r,\ \mathrm{s.t.}\ \Delta\theta_3\bigg|_{ d'=D/2}=\pi/8.
\end{align}
which results in $
    \dNzero= \sqrt{ |\cos \theta|\sin^2 \theta  {D^3}/{\lambda}}$.
The {\it maximum Fresnel distance} or simply {\it Fresnel distance}  $\overline{d}^{\mathrm{N0}}$ is then obtained by solving $\partial \dNzero/\partial \theta=0$, which yields $\theta=\tan^{-1}(\pm \sqrt{2})$ and 
\begin{align}
    \dNzeroMax=0.62\sqrt{{D^3}/{\lambda}}
\end{align}

\begin{figure}
    \centering
    \includegraphics[width=254pt]{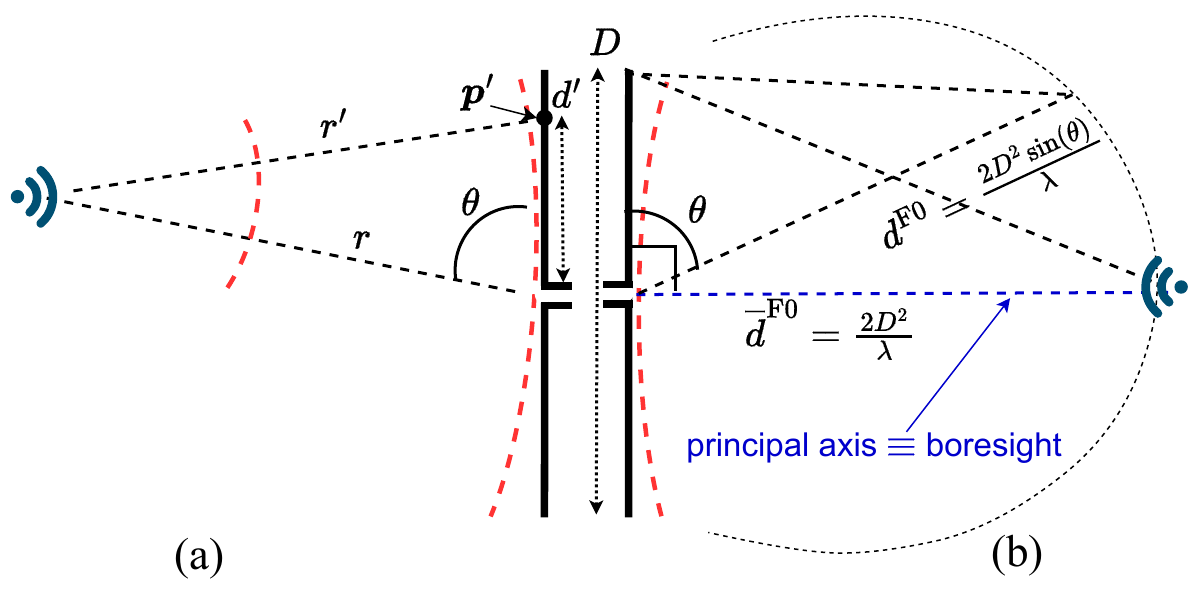}
    \caption{Antenna with diameter $D$. (a): UE located at an arbitrary distance.  (b): UE located at Fraunhofer distance.}
    \label{fig:Fraunhofer_single}
\end{figure}




\subsection{Fraunhofer and Fresnel regions for phased array antennas}


In this part, we elaborate on how the results of Fraunhofer and Fresnel distances should be revisited to apply to phased array antennas. 
 Consider a phased array antenna with the largest dimension of $D$. For simplicity we consider the uniform linear array (ULA) case, however, the results can be applied to the planar arrays as well. We consider $\frac{D}{\lambda}\geq 0.5$ which is commonly the case by having an inter-element spacing of half a wavelength and $N \geq 2$, or inter-element spacing lower than 0.5 (e.g., as in holographic surfaces) and a higher number of antenna elements.  
As seen in Fig. \ref{fig:Fraunhofer_single}, for a single-element antenna of diameter $D$, the Fraunhofer distance is based on the maximum phase error corresponding to the maximum distance of $\pbold'$ to the center of the antenna which is equal to $D/2$. This is not however the case in phased array antennas. 

A phased array antenna consists of multiple elements, each having an independent feed. Unlike the conventional single-element antenna model, where the maximum curvature of the signal is measured relative to the central feed, the maximum phase delay in a phased array antenna is determined by the largest difference between any two elements of the array. To elaborate, considering a source point transmitting a spherical wave, depending on the distance from the source, each element of the phased array antenna receives a sinusoidal signal with a specific phase delay at its feed point. If the phase shift throughout different points of each antenna element is negligible (i.e., all array elements are in the far-field of the source point), the signal at each feed point is minimally distorted. Otherwise, the signal may exhibit some harmonic distortion. In either case, we can use the phase of the dominant harmonic of each feed point signal and compare it with the phase of any other feed point signal to assess the planarity of the wavefront. Thus, any feed point can serve as a reference for the phase delay comparison.
\begin{figure}
    \centering
    \includegraphics[width=254pt]{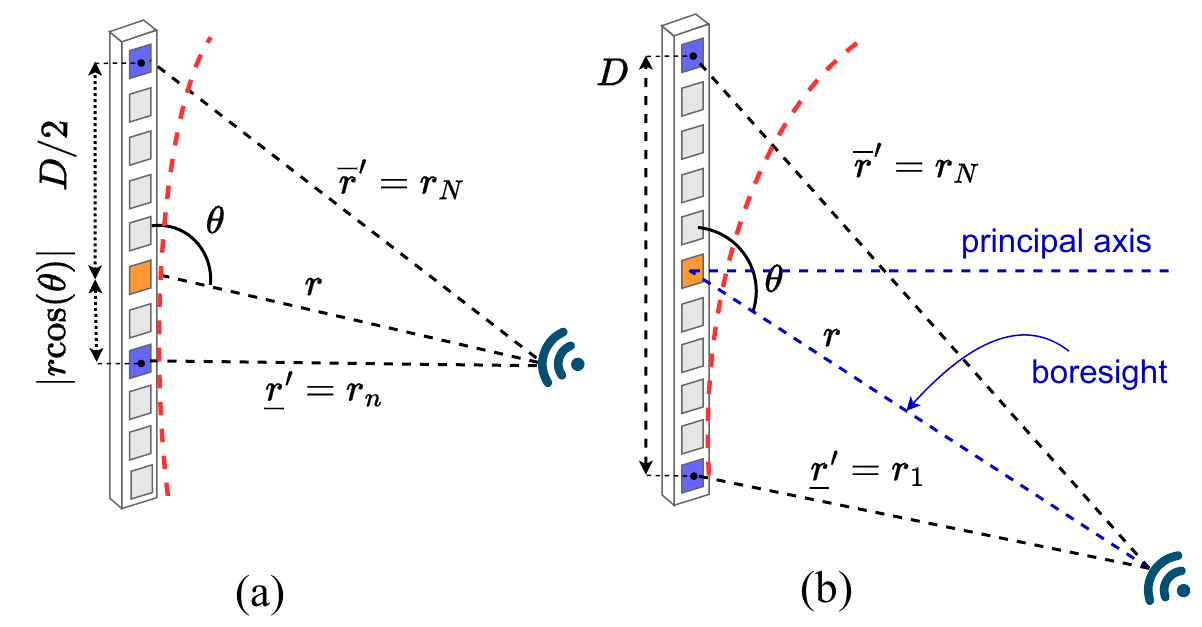}
    \caption{$N$-element ULA antenna with diameter $D$ for two scenarios regarding the position of the transmitter. 
}
    \label{fig:Fraunhofer_revisit}
\end{figure}
Fig. \ref{fig:Fraunhofer_revisit} illustrates an $N$-element ULA antenna with diameter $D$ representing two different cases for calculating the Fraunhofer distance. It is seen that the maximum phase delay is experienced for array elements with distance $\frac{D}{2}+|r\cos(\theta)|$ and $D$ corresponding to Figs. \ref{fig:Fraunhofer_revisit}-a and \ref{fig:Fraunhofer_revisit}-b respectively.\footnote{For the case shown in Fig. \ref{fig:Fraunhofer_revisit}-b, we have considered the approximation that the line perpendicular to the antenna plane (i.e., the line with distance $\underline{r}'$) coincides the center of some antenna element $n$.}
   Considering the two scenarios, the Fraunhofer distance is calculated as follows.
\begin{align}
\label{eq:Fraunhofer_array_calc22}
    \dF=r,\ \mathrm{s.t.}\ \Delta\theta_2\bigg|_{d'=D/2 + \Delta D}=\pi/8.
\end{align}
where
\begin{align}
\label{eq:DeltaD}
    \Delta D=\min \{ |r\cos \theta |,D/2 \}.
\end{align}
From \eqref{eq:Fraunhofer_array_calc22} and \eqref{eq:DeltaD}, $\dF$ is found by solving the following equation.
\begin{align}
\label{eq:dF110}
    \frac{2D^2 }{\lambda}\sin^2(\theta)\left( 1+\min\left\{ 1, \frac{2\dF|\cos{\theta}|}{D} \right\}
    \right)^2=\dF
\end{align}
If we consider $2\dF|\cos(\theta)|\leq D$, we can see that \eqref{eq:dF110} turns into a square polynomial equation. Let $\widetilde{F}(\theta)$ be the solution of this equation. In this case we have $\dF=\widetilde{F}(\theta)$. On the other hand, assuming $2\dF|\cos(\theta)|\leq D$ leads to the solution of \eqref{eq:dF110} as follows.
\begin{align}
    \label{eq:dF2}
    \dF=8D^2\sin^2(\theta)/\lambda
\end{align}
Besides, from \eqref{eq:dF110}, the Fraunhofer distance switches from $\widetilde{F}(\theta)$ to \eqref{eq:dF2}  at the angle the angle $\theta=\pi/2 - \theta^{\mathrm{F}}$ wherein the following equality holds.
\begin{align}
    \frac{8}{\lambda}D^2\sin^2(\theta)=\frac{D}{2|\cos(\theta)|}, \ \mathrm{for}\ \theta=\pi/2 - \theta^{\mathrm{F}}
\end{align}
where $\theta^{\mathrm{F}}=\pi/2-F^{-1}(\frac{\lambda}{2D})$, and $\theta^{\mathrm{F}}=\pi/2-F^{-1}(\frac{\lambda}{2D})$ in which $F^{-1}(\frac{\lambda}{2D})$ is the value of $\theta$ corresponding to the solution of $F(\theta)=\frac{\lambda}{2D}$ closet to $\pi/2$.
Considering this, together with \eqref{eq:dF2}, the Fraunhofer distance $\dF$ is obtained as follows.
\begin{align} 
\label{eq:dF234}
\dF =
\begin{cases}
     \widetilde{F}(\theta),
     & \text{if} \ \frac{\pi}{2}-\theta^{\mathrm{F}} \leq \theta \leq \frac{\pi}{2} +\theta^{\mathrm{F}}
 \\
8D^2\sin^2(\theta)/\lambda
    , & \text{otherwise } 
\end{cases} 
\end{align}

    As seen in \eqref{eq:dF234}, for $|\pi/2 - \theta|\geq \theta^{\mathrm{F}}$, we have $d^F=8D^2\sin^2 \theta/\lambda$. We call $\theta^{\mathrm{F}}$ the {\it Fraunhofer array angle}.
   The Fraunhofer array angle is tightly approximated as
    \begin{align}
        \label{eq:tFApprox}
        \theta^{\mathrm{F}}\approx \frac{1}{2}\sin^{-1}\left({\frac{\lambda}{8{D}}} \right)
    \end{align}
   
 The validity of the approximation can be easily verified by considering $\theta^{\mathrm{F}}=\pi/2 - 
 F^{-1}(\frac{\lambda}{2D})\approx \pi/2-F_0^{-1}(\frac{\lambda}{2D})$ for $D/\lambda \geq 0.5$ where
    \begin{multline}
        F(\theta)=8|\cos(\theta)|\sin^2(\theta) 
        = 
        8\cos(\theta)\sin(\theta)
        = {4}\sin(2\theta)
        \\
        \equiv F_0(\theta),
        \forall \theta \in [\underbrace{{\pi}/{2}- F^{-1}(1)}_{82.7^\circ},{\pi}/{2})
    \end{multline}
    
    

\begin{lemma}
   The maximum Fraunhofer distance is 
   \begin{align}
\label{eq:dFbar}
    \dFMax
    &=
    8D^2\cos^2(\theta^{\mathrm{F}})/\lambda\approx 4 \times \dFzeroMax
\end{align}
where $\dFzeroMax$ is the maximum Fraunhofer distance of a single-element antenna having the same diameter as the intended phased array.
\end{lemma}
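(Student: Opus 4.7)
The plan is to treat the piecewise definition of $\dF$ in \eqref{eq:dF234} branch by branch and then take the larger of the two maxima. On the outer branch $\theta\in[0,\pi/2-\theta^{\mathrm{F}}]\cup[\pi/2+\theta^{\mathrm{F}},\pi]$, we have $\dF=8D^2\sin^2(\theta)/\lambda$. Since $\sin^2(\theta)$ is strictly monotone in $|\pi/2-\theta|$ and symmetric about $\pi/2$, its supremum on this restricted domain is attained at the inner endpoints $\theta=\pi/2\pm\theta^{\mathrm{F}}$, yielding
\[
\max_{\text{outer}} \dF \;=\; \frac{8D^2}{\lambda}\sin^2\!\bigl(\tfrac{\pi}{2}-\theta^{\mathrm{F}}\bigr) \;=\; \frac{8D^2\cos^2(\theta^{\mathrm{F}})}{\lambda}.
\]

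For the inner branch $|\pi/2-\theta|\leq\theta^{\mathrm{F}}$, where $\dF=\widetilde{F}(\theta)$, I would show that the maximum is again attained at the boundary points, so it equals the same value. Continuity of $\dF$ at $\theta=\pi/2\pm\theta^{\mathrm{F}}$, which is built into the construction of $\theta^{\mathrm{F}}$ just above \eqref{eq:dF234}, gives $\widetilde{F}(\pi/2\pm\theta^{\mathrm{F}})=8D^2\cos^2(\theta^{\mathrm{F}})/\lambda$. At the midpoint $\theta=\pi/2$, substituting $\cos\theta=0$ into \eqref{eq:dF110} collapses the equation to $2D^2\sin^2(\theta)/\lambda=r$, giving $\widetilde{F}(\pi/2)=2D^2/\lambda=\dFzeroMax$, which is strictly smaller. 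To extend this to the entire inner interval, I would exploit monotonicity in $|\pi/2-\theta|$: rewriting \eqref{eq:dF110} with $\min\{\cdot,\cdot\}=2r|\cos\theta|/D$ as the quadratic $(ab^2)r^2+(2ab-1)r+a=0$ with $a=2D^2\sin^2(\theta)/\lambda$ and $b=2|\cos\theta|/D$, I would differentiate implicitly and verify that the physically relevant root (the one matching $2D^2/\lambda$ as $b\to 0$) is non-decreasing in $|\pi/2-\theta|$ on $[0,\theta^{\mathrm{F}}]$.

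Combining the two cases then establishes the main equality $\dFMax=8D^2\cos^2(\theta^{\mathrm{F}})/\lambda$. For the approximate identity, I would invoke the tight estimate in \eqref{eq:tFApprox}: under the standing assumption $D/\lambda\geq 0.5$, one has $\lambda/(8D)\leq 1/4$, so $\theta^{\mathrm{F}}\lesssim \tfrac{1}{2}\sin^{-1}(1/4)\approx 7.2^\circ$, whence $\cos^2(\theta^{\mathrm{F}})\approx 1$ and therefore $\dFMax\approx 8D^2/\lambda=4\cdot(2D^2/\lambda)=4\,\dFzeroMax$, recovering the stated factor-of-four increase.

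The main obstacle is the rigorous monotonicity argument for $\widetilde{F}$ on the inner branch: one must single out the correct root of the defining quadratic (the smaller root, which agrees with $2D^2/\lambda$ at $\theta=\pi/2$) and verify the sign of its $\theta$-derivative across the whole interior interval, rather than merely at the two endpoints. Everything else—evaluating $\sin^2$ at the boundary, reading off $\widetilde{F}(\pi/2)$ from \eqref{eq:dF110}, and the small-angle expansion of $\cos^2(\theta^{\mathrm{F}})$—is immediate from the formulas already derived.
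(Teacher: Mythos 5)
Your proposal follows essentially the same route as the paper's proof: restrict by symmetry, maximize the outer branch $8D^2\sin^2(\theta)/\lambda$ at the boundary angles $\theta=\pi/2\pm\theta^{\mathrm{F}}$ to get $8D^2\cos^2(\theta^{\mathrm{F}})/\lambda$, and dispose of the inner branch by arguing that $\widetilde{F}(\theta)$ is monotone in $|\pi/2-\theta|$ so its maximum also sits at those same boundary points. The monotonicity of $\widetilde{F}$ is precisely the step the paper itself dispatches with ``it can be verified,'' so you have correctly isolated the one nontrivial ingredient (including singling out the smaller root of the quadratic), and your small-angle justification of $\cos^2(\theta^{\mathrm{F}})\approx 1$ for the factor-of-four claim is a harmless addition the paper omits.
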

\begin{proof}
From \eqref{eq:dF234} we have $\dF(\theta)=\dF(\pi-\theta)$, and thus we only consider $\theta\in[0,\pi/2]$. Note that the term $\dF=8D^2\sin^2(\theta)/\lambda$ in \eqref{eq:dF234} is an increasing function of $\theta$ for $\theta\in(0,\pi/2-\theta^{\mathrm{F}}]$, and thus, we only need to show that $\dF<8D^2\sin^2(\theta)/\lambda\big|_{\theta=\pi/2-\theta^{\mathrm{F}}}= 8D^2\cos^2(\theta^{\mathrm{F}})/\lambda$  for $\theta\in[ \pi/2- \theta^{\mathrm{F}},\pi/2)$. It can be  verified that $\widehat{F}(\theta)$ is a monotonically decreasing function of $\theta$ for $F^{-1}(1)\leq F^{-1}(\frac{\lambda}{2D}) \leq \theta \leq \pi/2$ and thus the maximum Fraunhofer distance is obtained as \eqref{eq:dFbar}.
\end{proof}

The Fraunhofer distance versus $\theta$ for a single-element antenna (i.e., $N=1$) as well as ULAs with $N\in \{3,10,40\}$ is depicted in Fig. \ref{fig:Fraunhofer_array}. It is seen as $N$ increases, the value of $\theta^{\mathrm{F}}$ rapidly decreases toward zero, and for all values of $\theta$ other than the very small region around the boresight in the angular domain corresponding to $\theta\in[0,\pi/2-\theta^{\mathrm{F}}] \cup [\pi/2+\theta^{\mathrm{F}},\pi] $, the Fraunhofer distance is $8D^2\sin^2(\theta)/\lambda$ which is 4 times that corresponding to a single element antenna with the same diameter $D$.

\begin{figure}
    \centering
    \includegraphics[width=254pt]{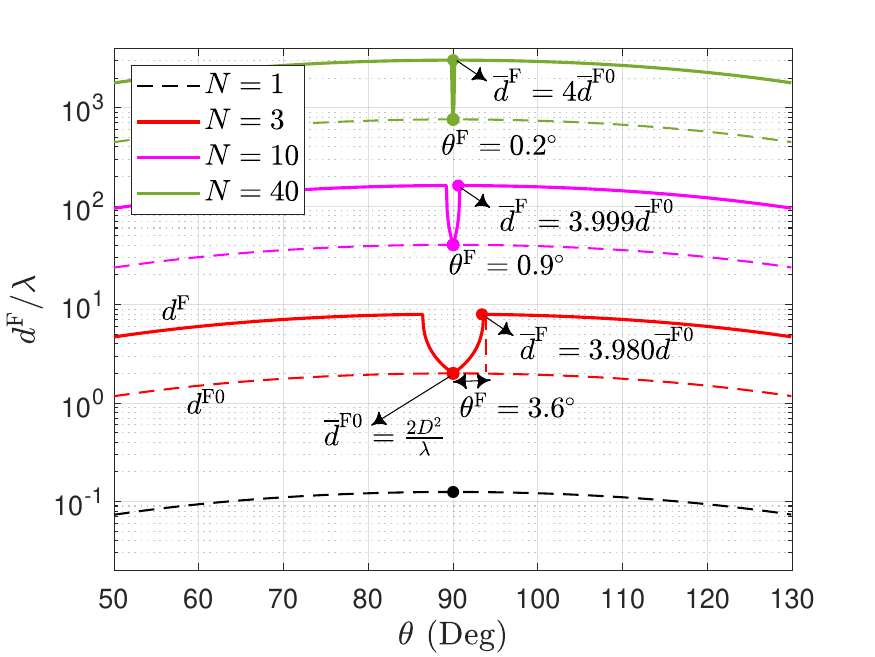}
    \caption{Fraunhofer distance per wavelength ($d^{\mathrm{F}}/\lambda$)  versus observation angle $\theta$ for $N$-element ULA with half-wavelength inter-element spacing (i.e., $D=\frac{(N-1) \lambda}{2}$). The dashed lines relate the Fraunhofer distance $d^{\mathrm{F0}}$ for the case of a single-element antenna whose diameter is the same as the corresponding antenna array diameter where $d^{\mathrm{F0}}=2D^2\sin^2(\theta)/\lambda$.
    }
    \label{fig:Fraunhofer_array}
\end{figure}

Now we deal with the characterization of the Fresnel distance.
Similar to the discussion presented for calculating the Fraunhofer distance function for the two scenarios of Figs. \ref{fig:Fraunhofer_revisit}-a and \ref{fig:Fraunhofer_revisit}-b, the Fresnel distance is calculated from the following equation:
\begin{align}
\label{eq:Fraunhofer_array_calc}
    \dN=r,\ \mathrm{s.t.}\ \Delta\theta_3\bigg|_{d'=D/2 + \Delta D}=\pi/8
\end{align}
where $\Delta D$ is given in \eqref{eq:DeltaD}. From \eqref{eq:deltatheta}, \eqref{eq:DeltaD}, and \eqref{eq:Fraunhofer_array_calc},  the following equality is employed for deriving $\dN$. 
\begin{align}
\label{eq:dF11}
    \frac{D^3 }{\lambda}|\cos(\theta)|\sin^2(\theta)\left( 1+\min\left\{ 1, \frac{2\dN|\cos{\theta}|}{D} \right\}
    \right)^3=(\dN)^2
\end{align}
From \eqref{eq:dF11}, if $2\dN|\cos(\theta)|<D$, then $\dN$ is obtained by solving $y$ in the following qubic polynomial:
  \begin{align}
    \label{eq:ydN}
     \frac{1}{\lambda}|\cos(\theta)|\sin^2(\theta)\left(D+2y|\cos(\theta)|\right)^3-y^2=0
\end{align}
Otherwise, considering $2\dN|\cos(\theta)|\geq D$, from \eqref{eq:dF11} we have $\dN=\sqrt{ 8|\cos \theta|\sin^2 \theta  \frac{D^3}{\lambda}}$.  
Besides, from \eqref{eq:dF11}, it is seen that the angles $\theta_i^\mathrm{N}$ and $\widetilde{\theta}_i^\mathrm{N}$, $i\in\{1,2\}$ wherein the Fresnel distance switches from the first to second relations are obtained from the following equation:
\begin{align}
    \sqrt{\frac{8D^3}{\lambda} |\cos(\theta)|\sin^2(\theta)}=\frac{D}{2|\cos(\theta)|}
\end{align}
which corresponds to the solutions of $G(\theta)=16|\cos^3(\theta)|\sin^2(\theta)=\frac{\lambda}{2D}$. Therefore, the Fresnel distance function is obtained as  \begin{align} 
\label{eq:dNN}
\dN =
\begin{cases}
     \sqrt{ 8|\cos \theta|\sin^2 \theta  \frac{D^3}{\lambda}},
     & \text{if} \ \theta \in[\theta^\mathrm{N}_1,\theta^\mathrm{N}_2]\cup [\widetilde{\theta}^\mathrm{N}_1,\widetilde{\theta}^\mathrm{N}_2] 
 \\
y(\theta), & \mathrm{otherwise} 
\end{cases} 
\end{align}
where $y(\theta)\equiv y$ is obtained by finding the root of the  cubic polynomial equation \eqref{eq:ydN} in the corresponding region.
\begin{lemma}
    The Fresnel distance $\dNMax$ for phased array antennas is obtained as 
    \begin{align}
\label{eq:dNNmax}
    \dNMax=1.75 \sqrt{\frac{D^3}{\lambda}}=\sqrt{8}\times \dNzeroMax
\end{align}
where $\dNzeroMax=0.62\sqrt{{D^3}/{\lambda}}$ is the Fresnel distance of a single-element antenna having the same diameter as that of the phased array.
\end{lemma}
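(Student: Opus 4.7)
The plan is to identify the maximizer of $\dN$ within the first (closed-form) branch of \eqref{eq:dNN}, verify that this maximizer lies inside the angular region where that branch is active, and then show that the second (cubic) branch cannot exceed this value anywhere in the complementary region. Throughout I would exploit the symmetry $\dN(\theta)=\dN(\pi-\theta)$ inherited from \eqref{eq:dF11}, whose right-hand side depends on $\theta$ only through $|\cos\theta|$ and $\sin^2\theta$; this reduces the problem to $\theta\in[0,\pi/2]$.

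First I would maximize $f_1(\theta):=\sqrt{8|\cos\theta|\sin^2\theta\,D^3/\lambda}$ over $(0,\pi/2)$. Differentiating $|\cos\theta|\sin^2\theta$ yields the unique interior maximizer $\theta^\star=\tan^{-1}\sqrt{2}$, with $\cos^2\theta^\star=1/3$ and $\sin^2\theta^\star=2/3$. Substituting gives
\[
f_1(\theta^\star)=\sqrt{16/(3\sqrt{3})}\,\sqrt{D^3/\lambda}\approx 1.75\,\sqrt{D^3/\lambda},
\]
which equals $\sqrt{8}$ times the single-element value $\dNzeroMax=0.62\sqrt{D^3/\lambda}$ and matches the claimed $\dNMax$. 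The remainder of the argument must certify that this candidate is in fact the global maximum of $\dN$.

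To place $\theta^\star$ inside the first-branch activation set $\{G(\theta)\geq \lambda/(2D)\}$ with $G(\theta)=16|\cos^3\theta|\sin^2\theta$, I would evaluate $G(\theta^\star)=16\cdot 3^{-3/2}\cdot 2/3 = 32/(9\sqrt{3})\approx 2.05$. Under the standing assumption $D/\lambda\geq 0.5$ we have $\lambda/(2D)\leq 1<G(\theta^\star)$, so the first branch is active at $\theta^\star$ and the candidate value is actually attained.

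The main technical step, and the likely obstacle, is to bound the cubic branch $y(\theta)$ in the complementary region without explicitly solving \eqref{eq:ydN}. The key observation is that in this region the case-defining inequality $2y|\cos\theta|<D$ directly bounds the bracket in \eqref{eq:ydN}: $(D+2y|\cos\theta|)^3<8D^3$. Substituting into \eqref{eq:ydN} gives
\[
y^2\;<\;\frac{8D^3}{\lambda}\,|\cos\theta|\sin^2\theta\;=\;f_1(\theta)^2,
\]
so $y(\theta)<f_1(\theta)\leq f_1(\theta^\star)$ throughout the complementary region. Combining with the first-branch analysis yields $\dNMax=f_1(\theta^\star)=1.75\sqrt{D^3/\lambda}=\sqrt{8}\,\dNzeroMax$, completing the proof.
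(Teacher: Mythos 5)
Your proposal is correct, and its first half (symmetry reduction, maximizing $\sqrt{8|\cos\theta|\sin^2\theta\,D^3/\lambda}$ at $\theta^\star=\tan^{-1}\sqrt{2}$, and checking that $\theta^\star$ lies in the region where that closed-form branch of \eqref{eq:dNN} is active) coincides with the paper's argument; the paper does the membership check by bounding the switching angles themselves, showing $\theta_1^{\mathrm{N}}<15.1^\circ$ and $\theta_2^{\mathrm{N}}>65^\circ$ for all $D/\lambda\geq 0.5$ so that $\theta^\star=54.7^\circ$ is interior, whereas you evaluate $G(\theta^\star)\approx 2.05>1\geq\lambda/(2D)$ directly — equivalent content, slightly more economical. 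Where you genuinely diverge is the treatment of the cubic branch $y(\theta)$. The paper handles it by asserting, essentially as a monotonicity claim to be "verified," that $y(\theta)<y(\theta_1^{\mathrm{N}})$ for $\theta<\theta_1^{\mathrm{N}}$ and $y(\theta)<y(\theta_2^{\mathrm{N}})$ for $\theta>\theta_2^{\mathrm{N}}$, and then uses continuity at the switching angles; this leaves the behavior of the implicit cubic root unproved. Your pointwise domination argument — using the case-defining inequality $2y|\cos\theta|<D$ to bound the bracket in \eqref{eq:ydN} by $8D^3$ and conclude $y(\theta)<\sqrt{8|\cos\theta|\sin^2\theta\,D^3/\lambda}\leq \dNMax$ everywhere the cubic branch applies — closes that gap without ever solving or analyzing the cubic, and is the cleaner and more rigorous route. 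The only point worth making explicit is that the inequality $2y|\cos\theta|<D$ is available precisely because it is the condition under which the cubic branch is the valid solution of \eqref{eq:dF11}, which you do state.
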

\begin{proof}
Noting $\dN(\theta)=\dN(\pi-\theta)$ as seen in \eqref{eq:dF11}, we only consider  $\theta\in[0,\pi/2]$. First, we consider the lower and upper bounds for $\theta_1^{\mathrm{N}}$ and $\theta_2^{\mathrm{N}}$ respectively. Assuming the minimum value of ${D}/{\lambda}$ equal to 0.5, $\theta_1^\mathrm{N}$ and $\theta_2^\mathrm{N}$ are obtained from the solutions of $G^{-1}(1)$ which equal $15.1^\circ$ and $65^\circ$. It can be easily verified that for $D/\lambda$ higher than 0.5, 
we have $\theta_1^\mathrm{N}<15.1^\circ$ and $\theta_2^\mathrm{N}>65^\circ$. On the other hand, the maximum value of $\dN$ corresponding to $\theta\in[\theta_1^\mathrm{N},\theta_2^\mathrm{N}]$ is obtained for any value of $D/\lambda \geq 0.5$ at  $\theta^*=\tan^{-1}(\sqrt{2})=54.7^\circ$, by considering that $\theta_1^\mathrm{N}\leq\theta^*\leq \theta_2^\mathrm{N}$, $\partial \cos(\theta)\sin^2(\theta)/\partial \theta|_{\theta=\theta^*}=0$ and $\partial^2 \cos(\theta)\sin^2(\theta)/\partial \theta^2|_{\theta=\theta^*}<0$.
Noting that $\theta_1^{\mathrm{N}}$ and $\theta_2^{\mathrm{N}}$ are  boundary values for which $\dN$ switches between $ \sqrt{ 8|\cos \theta|\sin^2 \theta  \frac{D^3}{\lambda}}$ and $y(\theta)$. By verifying $y(\theta)<y(\theta_1^{\mathrm{N}})$ for $\theta<\theta_1^{\mathrm{N}}$, and $y(\theta)<y(\theta_2^{\mathrm{N}})$ for $\theta>\theta_2^{\mathrm{N}}$, we conclude that for any value of $D/\lambda\geq 0.5$, the global maximum of \eqref{eq:dNN} is realized at $\theta=\theta^*$ which is equal to \eqref{eq:dNNmax}.
\end{proof}
\begin{figure}
    \centering
    \includegraphics[width=254pt]{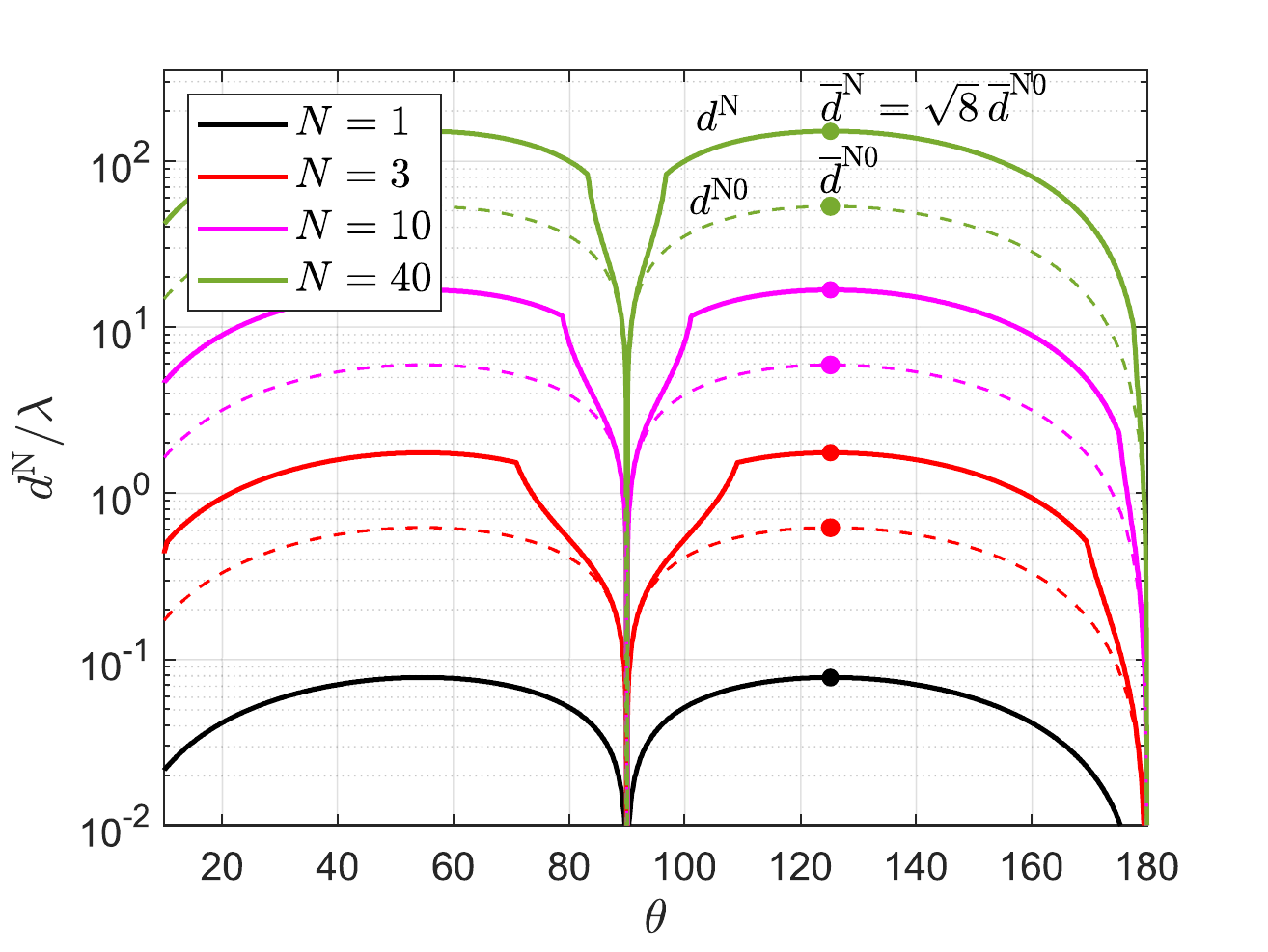}
    \caption{Fresnel distance function per wavelength ($\dN/\lambda$)  versus observation angle $\theta$ for $N$-element ULA with half-wavelength inter-element spacing. 
    The dashed lines relate the Fresnel distance $d^{\mathrm{N0}}$ for the case of a single-element antenna having an equal diameter to that of the corresponding array antenna.
    }
    \label{fig:Fresnel_array}
\end{figure}

The characterized Fresnel distance functions and the corresponding maximum values are depicted for a single-element antenna (i.e., $N=1$) as well as ULAs with $N\in \{3,10,40\}$ is depicted in Fig. \ref{fig:Fresnel_array}. The figure illustrates the relationship between the Fresnel distance functions for phased-array and single-element antennas having equal diameters, represented by solid and dashed lines, respectively. 

\section{Conclusions}
\label{sec:conclusions} 
In this paper, we revisited the characterization of the Fraunhofer and Fresnel regions established for single-element antennas, and provided appropriate expressions applicable to phased arrays.  We analytically derived corresponding closed-form solutions and showed that the Fraunhofer and Fresnel distances of phased arrays are increased compared to single-element antennas with the same dimensions.

\bibliographystyle{IEEEtran}

\bibliography{Mybib}

\end{document}